\pdfoutput=1
\documentclass[submission,copyright,creativecommons]{eptcs}
\providecommand{\event}{AFL 2026} 

\usepackage{iftex}

\ifpdf
\usepackage{underscore}         
\usepackage[T1]{fontenc}        
\else
\usepackage{breakurl}           
\fi
\newif\iflong
\longtrue
\usepackage{xcolor}
\usepackage{amsmath,amssymb,amsthm}
\usepackage{MnSymbol}
\usepackage{tikz}
\usetikzlibrary{calc}
\newtheorem{theorem}{Theorem}
\newtheorem{lemma}[theorem]{Lemma}
\newtheorem{proposition}[theorem]{Proposition}
\newtheorem{corollary}[theorem]{Corollary}
\theoremstyle{definition}
\newtheorem{definition}{Definition}
\theoremstyle{remark}
\newtheorem{remark}{Remark}

\title{\textbf{Guillotine and Tiling Cofiniteness \\in Unary Picture Languages}}
\author{Pierluigi {San Pietro}$^{[0000-0002-2437-8716]}$  \quad  Stefano {Crespi Reghizzi}$^{[0000-0001-5061-7402]}$
    \institute{Dipartimento di Elettronica, Informazione e Bioingegneria}
    \institute{Politecnico di Milano and CNR-IEIIT \\
        Milano, Italia}
    \email{\quad pierluigi.sanpietro@polimi.it\quad\qquad stefano.crespireghizzi@polimi.it}
    \and
    Antonio Restivo$^{[0000-0002-1972-6931]}$
    \institute{Dipartimento di Matematica e Informatica \\ Universit\`{a} di Palermo \\
        Palermo, Italia}
    \email{ antonio.restivo@unipa.it }
}

\newcommand{\titlerunning}{Guillotine and Tiling Cofiniteness in Unary Picture Languages}
\newcommand{\authorrunning}{P. San Pietro, S. Crespi Reghizzi, A. Restivo}

\hypersetup{
    bookmarksnumbered,
    pdftitle    = {\titlerunning},
    pdfauthor   = {\authorrunning},
    pdfsubject  = {EPTCS},               
}

\begin{document}
	\maketitle
\begin{abstract}
Motivated by the study of closure operations on picture languages, we ask when a set of unary tiles can generate all sufficiently large pictures either via  repeated horizontal and vertical concatenations or via tiling operations.

In one dimension, the lengths of words in the concatenation closure of a unary
language form an additive subsemigroup of \(\mathbb N_{>0}\); hence, they are finitely
generated. Cofiniteness is characterized by the gcd of these generators being
one. Since ordinary cofiniteness in two dimensions is too restrictive and
essentially reduces to degenerate one-dimensional conditions on strips, we
introduce and study  \emph{asymptotic cofiniteness}: all sufficiently large
rectangular pictures are generated. A direct extension of the one-dimensional
criterion, requiring the gcd of the tile heights and of the tile widths to be
one, is not sufficient, because local congruence obstructions may persist in two
dimensions.

We give an exact arithmetic characterization of asymptotic cofiniteness
for arbitrary, possibly infinite, sets of unary rectangular tiles.
The characterization is the same for the guillotine closure, obtained by horizontal and vertical concatenation, 
and for the full tiling closure.
For finite tile sets, the proof combines semigroup arguments and an explicit
least-common-multiple stacking construction for sufficiency, while necessity is
obtained by a roots-of-unity argument applying also to nonsliceable tilings. The
extension to infinite tile sets follows from the finite-basis theorem for Klarner
systems.
\end{abstract}

\section{Introduction}\label{sect:intro}

In one dimension, cofiniteness is a well understood property. 
If \(L\) is a finite language of unary words, then its concatenation closure \(L^{+}\) is cofinite if and only if the greatest common divisor of the lengths of the words occurring in \(L\) is \(1\). 

In this paper we study the corresponding question in two dimensions, namely for unary picture languages~\cite{GiammRestivo1997, BMV2007,DBLP:books/ems/21/Crespi-ReghizziGL21}. A picture language is unary when it is defined over a one-letter alphabet; in this case, every picture is a rectangular array of copies of the same symbol and is therefore completely determined by its height and width.
Although in the unary setting pictures can be identified with their rectangular sizes, the objects of concern are still picture languages and language-theoretic closure operations. 
A picture language generates larger pictures either by repeated horizontal and vertical concatenation, giving the \emph{guillotine closure}, or by arbitrary tilings, giving the \emph{tiling closure}~\cite{DBLP:journals/tcs/Simplot99}. 
We ask when all sufficiently large rectangular unary pictures belong to the generated language,  that is, when the set of generated sizes contains a northeast quadrant. We call this property \emph{asymptotic cofiniteness} 
(Figure~\ref{fig:cofiniteness-notions}, right).

These two closure operations are classical in the theory of picture languages. 
Arbitrary tilings play a central role in local and grammatical descriptions of pictures~\cite{DBLP:journals/tcs/Simplot99,DBLP:journals/tcs/CherubiniCPP06,DBLP:journals/iandc/PradellaCC11}, whereas closure under horizontal and vertical concatenation is more restrictive, since pictures are built recursively by composing two pictures with matching dimensions along a full common edge~\cite{GiammRestivo1997,DBLP:conf/stacs/Matz97}.
 
For recognizable picture languages~\cite{GiammarresiR92,DBLP:journals/tcs/LatteuxS97}, Simplot proved that closure under horizontal and vertical concatenation is strictly contained in closure under arbitrary tilings~\cite{DBLP:journals/tcs/Simplot99}. 
More generally, several familiar one-dimensional properties change substantially when passing to pictures: for instance, closure-theoretic decision problems that are decidable for word languages become undecidable already for local and recognizable picture languages, while bases of closed picture languages remain unique but need not be recognizable~\cite{DBLP:conf/dlt/CrespiReghizziRP26}. 
The present paper focuses on a different aspect of this contrast, namely asymptotic cofiniteness.

This asymptotic notion is natural in the present setting. 
Indeed, ordinary cofiniteness (Figure~\ref{fig:cofiniteness-notions}, left) in \(\mathbb N_{>0}^2\) would already force all but finitely many horizontal strips \((1,w)\) and vertical strips \((h,1)\) to belong to the generated language. 
But a rectangle of height \(1\) can only be generated from pieces of height \(1\), and dually for width \(1\). 
In effect, cofiniteness in the classical sense would reduce the problem to two independent one‑dimensional strips — a much less interesting situation.
In applications such as architectural floorplans or VLSI layouts~\cite{DBLP:journals/tcs/KantH97,DBLP:journals/siamcomp/EppsteinMSV12,DBLP:journals/tcs/KumarS21,DBLP:journals/combtheory/AsinowskiCFF25}, relying on arbitrarily long one-cell-wide corridors is often unnatural. 
This motivates asymptotic cofiniteness, which avoids such degenerate constructions and captures the ability to fill all large rectangles, while allowing infinitely many missing sizes (see Figure~\ref{fig:cofiniteness-notions}).

\begin{figure}[t]
    \centering
    \begin{tikzpicture}[scale=0.55,line cap=round,line join=round]
        \begin{scope}
            \draw[->] (0,0) -- (6.5,0) node[right] {$w$};
            \draw[->] (0,0) -- (0,5.5) node[above] {$h$};
            
            \foreach \x in {1,...,6}{
                \foreach \y in {1,...,5}{
                    \fill (\x,\y) circle (1.8pt);
                }
            }
            
            \foreach \p in {(1,1),(2,3),(4,2)}{
                \fill[white] \p circle (2.4pt);
                \draw \p circle (2.4pt);
            }
            
            \node at (3.3,-0.9) {\small ordinary cofiniteness:};
            \node at (3.3,-1.9) {\small only finitely many missing sizes $\circ$;};
        \end{scope}
        
        \begin{scope}[xshift=15cm]
            \draw[->] (0,0) -- (6.5,0) node[right] {$w$};
            \draw[->] (0,0) -- (0,5.5) node[above] {$h$};
            
            \draw[dashed] (2.5,0) -- (2.5,5.3);
            \draw[dashed] (0,2.5) -- (6.3,2.5);
            \node[below] at (2.5,0) {$w_0$};
            \node[left] at (0,2.5) {$h_0$};
            
            \fill[gray!15] (2.5,2.5) rectangle (6.3,5.3);
            
            \foreach \x in {1,...,6}{
                \foreach \y in {1,...,5}{
                    \fill (\x,\y) circle (1.8pt);
                }
            }
            
            \foreach \p in {(2,1),(3,1),(4,1),(5,1),(6,1),(1,3),(1,4),(1,5),(2,4),(2,5)}{
                \fill[white] \p circle (2.4pt);
                \draw \p circle (2.4pt);
            }
            
            \node at (3.3,-1) {\small asymptotic cofiniteness:};
             \node at (3.3,-2) {\small the shaded area contains  no missing size $\circ$;};
             \node at (3.3,-3) {\small outside the shaded area there  may be infinitely many $\circ$ };
        \end{scope}
    \end{tikzpicture}
    \caption{Ordinary cofiniteness allows only finitely many missing sizes (denoted by $\circ$), whereas
        asymptotic cofiniteness requires all sufficiently large heights and
        widths to be present simultaneously, but infinitely many sizes may still be missing.}
    \label{fig:cofiniteness-notions}
\end{figure}

It is also useful to distinguish our setting from ordinary additive semigroups in \(\mathbb N^2\). 
If a finite set \(F\subseteq\mathbb N_{>0}^2\) generates a monoid under vector
addition, then all generated points lie in the rational cone determined by the
extreme slopes of the generators.
Hence even asymptotic cofiniteness cannot occur, since an upper-right quadrant contains points of arbitrarily small and arbitrarily large slope. 
Thus ordinary vector addition cannot produce the product‑shaped sets we need; the extra power of independent horizontal and vertical growth is essential.

Our viewpoint is also different from the classical two-dimensional packing and strip-packing problems~\cite{iori2021exact,DBLP:journals/dam/LodiMP17}, where the input is typically a finite set of rectangular items and the question is whether they can be packed into a prescribed container, or how to optimize a quantity such as the number of bins or the used height. 
Here, by contrast, the  tile set acts as a generating set, and the object of study is the infinite family of rectangles obtainable under closure operations, with cofiniteness as the relevant asymptotic property.

Returning to asymptotic cofiniteness, one might at first expect a direct extension of the one-dimensional criterion: perhaps it is enough that the gcd of the tile heights be \(1\) and the gcd of the tile widths be \(1\). 
This is false in general. 

For instance, following the convention that the first coordinate is height and
the second is width, we write \(a^{h,w}\) for the unary picture with \(h\)
rows and \(w\) columns. Thus \(a^{2,3}\) has height \(2\) and width \(3\).
Consider
\[
L=\{a^{2,3},a^{2,5},a^{7,3}\}.
\]

Although the gcd of the occurring heights is \(\gcd\{2,7\}=1\) and the gcd of the occurring widths is \(\gcd\{3,5\}=1\), every picture in $L^{++}$ of odd height has width divisible by \(3\). 
Hence asymptotic cofiniteness fails. 
Indeed, the only tiles of odd height have width \(3\), and the same obstruction propagates under tiling.

Our main result gives an exact arithmetic characterization, and shows that it is the same for the guillotine closure and for the tiling closure. 
Identifying unary pictures with their sizes, let \(F\subseteq\mathbb N_{>0}^2\) be the set of tile sizes. 
The precise condition involves only a few simple sets derived from 
$F$:
\[
H=\{\,h\ge 1 : \exists w\ge 1,\ (h,w)\in F\,\},
\]
\[
P=\{\, \text{prime } p: p\mid w \text{ for some } (h,w)\in F\,\},
\]
and, for each \(p\in P\),
\[
H_{p}=\{\,h\ge 1 : \exists (h,w)\in F \text{ with } p\nmid w\,\},
\]
where $p\mid w$ denotes that $p$ divides $w$ (i.e., $w$ is a multiple of $p$), and $p\nmid w$ otherwise.

Thus \(H_{p}\) is the set of heights of tiles whose width is not divisible by \(p\).

We prove that the following conditions are equivalent:
\begin{enumerate}
    \item the guillotine closure of \(F\) is asymptotically cofinite;
    \item the tiling closure of \(F\) is asymptotically cofinite;
    \item \(\gcd(H)=1\) and \(\gcd(H_{p})=1\) for every \(p\in P\).
\end{enumerate}

We first prove the characterization for finite tile sets. 
The proof is split into two parts. 
For the guillotine closure, the sufficiency direction $(3) \Longrightarrow (1)$ combines semigroup arguments with an explicit least-common-multiple stacking construction. 
The necessity direction $(2) \Longrightarrow (3)$ is then proved simultaneously for the guillotine and tiling closures by a roots-of-unity argument that works even for nonsliceable tilings such as windmills. Finally, the same characterization is extended to infinite sets using the finite-basis properties of Klarner systems~\cite{Reid2005KlarnerSystems}. While in the finite case asymptotic cofiniteness is decidable, in the infinite case decidability may depend on the definition of the tile set. 

The paper is organized as follows. Section~\ref{s-preliminaries} recalls the
basic definitions of picture languages and tiling closures. Section~\ref{s-main}
contains the main theorem on asymptotic cofiniteness
for finite tile sets. Section~\ref{s-infinite} extends the result to arbitrary tile sets.
Section~\ref{s-concl} concludes with a discussion of related results and an open problem.

\section{Notations and preliminaries}\label{s-preliminaries}

A \emph{picture} is a rectangular array of letters over a finite alphabet. If a
picture has height \(h\) and width \(w\), its \emph{size} is \((h,w)\).
 Given two
pictures \(p,q\), their horizontal concatenation \(p\overt q\) is defined when
they have the same height, and their vertical concatenation
\(p\ominus q\) is defined when they have the same width. These
operations extend naturally to picture languages.

A \emph{tiling} of a picture is a partition of its domain into rectangular blocks,
each block carrying a picture from a prescribed language. The \emph{tiling
    closure} \(L^{++}\) of a language \(L\) is the set of all pictures that admit 
a tiling by elements of \(L\)~\cite{DBLP:journals/tcs/Simplot99}. 
Intuitively, $L^{++}$ is the set  of all  pictures that can be exactly paved, without gaps or overlaps, with pictures in $L$.

The
\emph{guillotine closure} \(L^{\ominus\overt +}\) is the smallest language
containing \(L\) and closed under both \(\overt\) and \(\ominus\). 

If \(q\in L^{\ominus\overt +}\), then \(q\) admits a partition \(r\) into
rectangular blocks from \(L\) such that \(q\) can be obtained by horizontal and
vertical concatenation of these blocks. Such a partition is called a
\emph{guillotine} (or \emph{sliceable}) decomposition of \(q\) into elements of
\(L\). Every guillotine decomposition is a tiling.

Equivalently, a guillotine decomposition can be described top-down by
recursively splitting a rectangle into two rectangles along a full-width or
full-height line. We call each such split a \emph{guillotine cut}. An example
is shown in Figure~\ref{fig-guillotine}.

\begin{figure}

\begin{center}
    \scalebox{0.6}{
        \begin{tikzpicture}[x=1cm,y=1cm,line cap=round]
            \colorlet{blockA}{gray!18}
            \colorlet{blockB}{gray!8}
            \colorlet{blockC}{gray!1}
            \colorlet{blockD}{gray!28}
            \colorlet{blockE}{gray!42}

            \foreach \r/\y in {1/0,2/-1,3/-4,4/-5}{
    \foreach \c/\x in {1/0,2/2.0,3/5.0,4/7,5/8}{
        \coordinate (\r\c) at (\x,\y);
    }
}
            
            \fill[blockD] (11) rectangle (42); 
            \fill[blockA] (12) rectangle (24); 
            \fill[blockE] (22) rectangle (33); 
            \fill[blockC] (32) rectangle (43); 
            \fill[blockB] (23) rectangle (44); 
            \fill[blockA] (14) rectangle (45);

            \draw (11) -- (41);
            \draw (12) -- (42);
            \draw (23) -- (43);
            \draw (14) -- (44);
            \draw (15) -- (45);
            
            \draw (11) -- (15);
            \draw (22) -- (24);
            \draw (32) -- (33);
            \draw (41) -- (45);

        \end{tikzpicture}
    }
\end{center}
\caption{This tiling is a guillotine decomposition: it admits a recursive decomposition by full-width or full-height guillotine cuts.}\label{fig-guillotine}
\end{figure}

\begin{figure}
    
\begin{center}
    \scalebox{0.6}{
        \begin{tikzpicture}[x=1cm,y=1cm,line cap=round]
            \colorlet{blockB}{gray!8}
            \colorlet{blockC}{gray!1}
            \colorlet{blockD}{gray!8}
            \colorlet{blockE}{gray!42} 
            
            \foreach \r/\y in {1/0,2/-1,3/-4,4/-5}{
                \foreach \c/\x in {1/0,2/2.0,3/5.0,4/7}{
                    \coordinate (\r\c) at (\x,\y);
                }
            }
            
            \fill[blockC] (12) rectangle (24); 
            \fill[blockB] (23) rectangle (44); 
            \fill[blockC] (31) rectangle (33); 
            \fill[blockD] (11) rectangle (32); 
            \fill[blockE] (22) rectangle (33); 
            
            \node[font=\small] at ($(12)!0.5!(24)$) {$(1,5)$};
            \node[font=\small, rotate=90] at ($(23)!0.5!(44)$) {$(4,2)$};
            \node[font=\small] at ($(31)!0.5!(43)$) {$(1,5)$};
            \node[font=\small, rotate=90] at ($(11)!0.5!(32)$) {$(4,2)$};
            \node[font=\small] at ($(22)!0.5!(33)$) {$(3,3)$};
            
            \draw (11) -- (41);
            \draw (12) -- (32);
            \draw (23) -- (43);
            \draw (14) -- (44);
            
            \draw (11) -- (14);
            \draw (22) -- (24);
            \draw (31) -- (33);
            \draw (41) -- (44);
        \end{tikzpicture}
    }
    \qquad
    \scalebox{0.6}{
        \begin{tikzpicture}[x=1cm,y=1cm,line cap=round]
            \colorlet{blockB}{gray!8}
            \colorlet{blockC}{gray!1}
            \colorlet{blockD}{gray!8}
            \colorlet{blockE}{gray!42} 
            
            \foreach \r/\y in {1/0,2/-1,3/-4,4/-5}{
                \foreach \c/\x in {1/0,2/2.0,3/5.0,4/7}{
                    \coordinate (\r\c) at (\x,\y);
                }
            }
            
            \fill[blockC] (11) rectangle (23); 
            \fill[blockD] (13) rectangle (34); 
            \fill[blockC] (32) rectangle (44); 
            \fill[blockB] (21) rectangle (42); 
            \fill[blockE] (22) rectangle (33); 
            
            \node[font=\small] at ($(11)!0.5!(23)$) {$(1,5)$};
            \node[font=\small, rotate=90] at ($(13)!0.5!(34)$) {$(4,2)$};
            \node[font=\small] at ($(32)!0.5!(44)$) {$(1,5)$};
            \node[font=\small, rotate=90] at ($(21)!0.5!(42)$) {$(4,2)$};
            \node[font=\small] at ($(22)!0.5!(33)$) {$(3,3)$};
            
            \draw (11) -- (41);
            \draw (22) -- (42);
            \draw (13) -- (33);
            \draw (14) -- (44);
            
            \draw (11) -- (14);
            \draw (21) -- (23);
            \draw (32) -- (34);
            \draw (41) -- (44);
        \end{tikzpicture}
    }
\end{center}
\caption{The two displayed tilings over the tile set
    \(F=\{(1,5),(4,2),(3,3)\}\) are examples of so-called \emph{windmills}.
    Neither admits a guillotine decomposition into tiles of sizes in \(F\).}
\label{fig:windmills}
\end{figure}

The  inclusions  $
L \subseteq L^{\ominus\overt +}\subseteq   L^{++}$
hold for any language $L$. The second inclusion may be proper, as shown in
Figure~\ref{fig:windmills}. The two windmill pictures of size \((5,7)\)
admit the displayed tilings but no guillotine decomposition into tiles of
sizes in \(F=\{(1,5),(4,2),(3,3)\}\).

\paragraph{The unary case.}
From now on we work over a one-letter alphabet, say \(\{a\}\). Then a picture is
completely determined by its size; we write \(a^{h,w}\) for the unary picture of
height \(h\) and width \(w\). Thus a finite set
\(F\subseteq\mathbb N_{>0}^2\) represents the finite unary picture language
\[
\{\,a^{h,w} : (h,w)\in F\,\}.
\]

Under this identification, horizontal and vertical concatenation induce the
partial operations on sizes
\[
(h,w_1)\overt (h,w_2):=(h,w_1+w_2),
\qquad
(h_1,w)\ominus (h_2,w):=(h_1+h_2,w),
\]
whenever the dimensions match.

We write \(F^{\ominus\overt +}\) for the set of sizes of pictures in the
guillotine closure of the unary language represented by \(F\). Equivalently,
\(F^{\ominus\overt +}\) is the smallest subset of \(\mathbb N_{>0}^2\) containing
\(F\) and closed under the two operations above.

Similarly, we write \(F^{++}\) for the set of sizes of pictures in the tiling
closure of the unary language represented by \(F\). Thus \((h,w)\in F^{++}\) if
and only if the unary picture \(a^{h,w}\) admits a tiling by unary pictures whose
sizes belong to \(F\). Clearly\footnote{Both \(F^{++}\) and \(F^{\ominus\overt +}\) are Klarner systems~\cite{Reid2005KlarnerSystems}, i.e., subsets of \(\mathbb N_{>0}^2\) closed under horizontal and vertical composition. See Definition~\ref{def:Klarner} in Section~\ref{s-infinite}.}
\[
F^{\ominus\overt +}\subseteq F^{++}.
\]

\begin{definition}[asymptotic cofiniteness]
    
A subset \(S\subseteq \mathbb N_{>0}^2\) is said to be \emph{asymptotically cofinite} if there exists \((h_0,w_0)\in\mathbb N_{>0}^2\) such that
\[
(h,w)\in S \qquad\text{for all } h\ge h_0,\ w\ge w_0.
\]
\end{definition}

For an arbitrary nonempty set \(A\subseteq\mathbb N_{>0}\), we write
\(\gcd(A)\) for the greatest common divisor of all elements of \(A\), i.e., the
largest integer dividing every element of \(A\). We let \(\langle A\rangle\) denote
the additive semigroup generated by \(A\).

We use the conventions \(\gcd(\emptyset)=0\) and
\(\langle\emptyset\rangle=\emptyset\).

The following result is standard.

\begin{lemma}
    \label{lem:semigroup-cofinite}
    Let \(S\subseteq\mathbb N_{>0}\) be a nonempty additive semigroup. Then \(S\) is
    cofinite in \(\mathbb N_{>0}\) if and only if \(\gcd(S)=1\).
\end{lemma}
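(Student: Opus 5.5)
We prove the two directions separately; only sufficiency needs real work.

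\emph{Necessity.} Suppose \(S\) is cofinite. Then \(S\) contains two consecutive integers \(n\) and \(n+1\). Any common divisor of the elements of \(S\) divides \((n+1)-n=1\), hence \(\gcd(S)=1\).

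\emph{Sufficiency.} Assume \(\gcd(S)=1\), and proceed in three steps.

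First, we reduce to a finite set. Consider the descending chain of positive integers \(g_k=\gcd(s_1,\dots,s_k)\) along an enumeration of \(S\). It stabilizes at \(\gcd(S)=1\). Hence there are finitely many elements \(s_1,\dots,s_k\in S\) with \(\gcd(s_1,\dots,s_k)=1\).

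Second, we obtain two consecutive elements of \(S\). By B\'ezout there are integers \(c_i\) with \(\sum_i c_i s_i=1\). Split the sum into its positive and negative parts: set \(a=\sum_{c_i>0}c_i s_i\) and \(b=\sum_{c_i<0}(-c_i)s_i\), so that \(a-b=1\). If \(b=0\), then \(a=1\in S\), so \(S=\mathbb N_{>0}\) and we are done. Otherwise \(b\) is a nonempty positive combination of elements of \(S\), so \(b\in S\) because \(S\) is an additive semigroup. Then \(a=b+1\ge 2\) is also a nonempty positive combination, so \(a\in S\). Thus both \(m:=b\) and \(m+1\) lie in \(S\).

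Third, we cover all large integers. For \(0\le j\le t\), the element \(j(m+1)+(t-j)m=tm+j\) lies in \(S\) whenever \(t\ge1\). As \(j\) ranges over \(0,\dots,t\), these elements fill the interval \([tm,\,tm+t]\). Once \(t\ge m-1\), the intervals for \(t\) and \(t+1\) overlap or abut, since \(tm+t\ge (t+1)m-1\). Hence every integer \(n\ge (m-1)m\) lies in \(S\), and \(S\) is cofinite.

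The only delicate point is the second step, namely making sure the Bézout combination turns into genuine elements of \(S\). This is handled by separating the positive and negative coefficients and treating the degenerate case \(b=0\) on its own.
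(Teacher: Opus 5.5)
Your proof is correct and follows the same route as the paper: reduce to finitely many elements of $S$ with gcd $1$, then show that the semigroup they generate is cofinite. The paper simply invokes the standard fact that numerical semigroups are cofinite and leaves necessity implicit, whereas you prove both explicitly, using B\'ezout to get a consecutive pair $m,m+1\in S$ and then the overlapping intervals $[tm,tm+t]$.
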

Indeed, if \(\gcd(S)=1\), then finitely many elements of \(S\) already have gcd \(1\);
the semigroup generated by them is a numerical semigroup and is cofinite, hence
so is \(S\).

In particular, if \(A\subseteq\mathbb N_{>0}\) is finite and nonempty, then
\(\langle A\rangle\) is cofinite if and only if \(\gcd(A)=1\).

\section{The finite case}\label{s-main}

We prove that both closures share the same arithmetic characterization.
Throughout this section, \(F\subseteq\mathbb N_{>0}^2\) is a nonempty finite set of tiles.

\subsection{Sufficiency for the guillotine closure}

Recall that
\[
H=\{\,h\ge 1 : \exists w\ge 1,\ (h,w)\in F\,\},
\qquad
P=\{\,p : p\text{ is prime and }p\mid w\text{ for some }(u,w)\in F\,\}.
\]
For each \(d\ge 2\), let
\[
H_d:=\{\,h\ge 1\mid \exists (h,w)\in F,\ d\nmid w\,\}.
\]

Notice that, for every prime \(p\notin P\), one has \(H_p=H\).

For each height \(h\ge 1\), set
\[
W_h:=\{\,w\ge 1\mid (h,w)\in F^{\ominus\overt +}\,\}.
\]
Since the guillotine
closure \(F^{\ominus\overt +}\) of the set $F$ of tiles is closed under horizontal concatenation, \(W_h\) is either empty or
an additive semigroup of \(\mathbb N_{>0}\).

We first prove the key lemma.

\begin{lemma}[Prime-avoiding LCM construction]
    \label{lem:Wh-cofinite}
    For every prime \(r\) and every \(h\in\langle H_r\rangle\), there exists
    \(w\in W_h\) such that \(r\nmid w\).
\end{lemma}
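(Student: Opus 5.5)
Write \(h=h_1+\dots+h_k\) with each \(h_i\in H_r\), and for each \(i\) fix a tile \((h_i,w_i)\in F\) with \(r\nmid w_i\). These choices exist by the definition of \(\langle H_r\rangle\) and of \(H_r\). The plan is to stack, vertically, \(k\) horizontal strips of a common width \(w\). Strip \(i\) has height \(h_i\) and is obtained by horizontally concatenating copies of the tile \((h_i,w_i)\). The construction then reduces to choosing \(w\) so that every strip can be built with width exactly \(w\) and \(r\nmid w\).

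Concatenating \(c_i\) copies of \((h_i,w_i)\) gives \((h_i,c_iw_i)\in F^{\ominus\overt +}\). So I need a common multiple \(w\) of \(w_1,\dots,w_k\) not divisible by \(r\). Take \(w=\operatorname{lcm}(w_1,\dots,w_k)\). Since \(r\) is prime and divides no \(w_i\), the exponent of \(r\) in each \(w_i\) is \(0\). The exponent of \(r\) in the lcm is the maximum of these exponents, hence also \(0\), so \(r\nmid w\). With \(c_i=w/w_i\), each strip \((h_i,w)\) lies in \(F^{\ominus\overt +}\).

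Now stack the strips by vertical concatenation, which is legal because all of them have width \(w\). This gives \((h_1+\dots+h_k,w)=(h,w)\in F^{\ominus\overt +}\), so \(w\in W_h\) and \(r\nmid w\), as required.

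The only point needing care is the primality of \(r\). It makes "not divisible by \(r\)" stable under taking lcm, which fails for composite moduli: for example, \(4\nmid 2\) and \(4\nmid 6\), yet \(4\mid 12\). The restriction in the statement to prime \(r\) handles exactly this. I expect no real obstacle beyond that observation. Repeated heights among the \(h_i\) are harmless, since each summand has its own chosen tile (possibly the same tile), and \(k\ge 1\) because \(h\ge1\).
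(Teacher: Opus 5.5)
Your proof is correct and is essentially the paper's argument: for each summand $h_j\in H_r$ you pick a tile of width not divisible by $r$, replicate it horizontally to the common width $\ell=\operatorname{lcm}(w_1,\dots,w_k)$, which is not divisible by the prime $r$, and stack the strips vertically. One slip in your side remark: $\operatorname{lcm}(2,6)=6$, not $12$, and non-divisibility by a prime power such as $4$ is in fact preserved under lcm, since the exponent of $2$ in an lcm is the maximum of the exponents; a correct counterexample needs a modulus with two distinct prime factors, e.g.\ $6\nmid 2$ and $6\nmid 3$ but $6\mid\operatorname{lcm}(2,3)=6$.
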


\begin{proof}
    Since \(h\in\langle H_r\rangle\), we may write
    \[
    h=h_1+\cdots+h_m
    \qquad (h_j\in H_r).
    \]
    For each \(j\), choose a tile \((h_j,w_j)\in F\) with \(r\nmid w_j\) (which is possible since $h_j\in H_r$), and let
    \[
    \ell=\operatorname{lcm}(w_1,\dots,w_m).
    \]
    Since \(r\) is prime and none of the \(w_j\) is divisible by \(r\), neither is
    \(\ell\). Since $\ell$ is a multiple of $w_j$, by horizontally concatenating \(\ell/w_j\) copies of \((h_j,w_j)\), we
    obtain \((h_j,\ell)\in F^{\ominus\overt +}\) for each \(j\). Vertically stacking
    these rectangles yields
    \[
    (h,\ell)\in F^{\ominus\overt +}.
    \]
    Thus \(\ell\in W_h\) and \(r\nmid \ell\).
\end{proof}

\begin{corollary}\label{cor:Wh-cofinite}
      If
  \[
  h\in \langle H\rangle \cap \bigcap_{p\in P}\langle H_p\rangle,
  \]
  then \(W_h\) is cofinite in \(\mathbb N_{>0}\).    
\end{corollary}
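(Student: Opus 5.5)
Since \(F^{\ominus\overt +}\) is closed under horizontal concatenation, \(W_h\) is either empty or an additive subsemigroup of \(\mathbb N_{>0}\). So by Lemma~\ref{lem:semigroup-cofinite} it suffices to show that \(W_h\) is nonempty and \(\gcd(W_h)=1\). Writing \(g=\gcd(W_h)\), we will show that no prime divides \(g\), by producing, for each prime \(r\), an element of \(W_h\) not divisible by \(r\).

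\emph{Nonemptiness.} Since \(h\in\langle H\rangle\), write \(h=h_1+\cdots+h_m\) with \(h_j\in H\). Choose tiles \((h_j,w_j)\in F\) and let \(\ell=\operatorname{lcm}(w_1,\dots,w_m)\). The stacking construction in the proof of Lemma~\ref{lem:Wh-cofinite} gives \((h,\ell)\in F^{\ominus\overt +}\), so \(W_h\neq\emptyset\).

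\emph{Coprimality.} Fix a prime \(r\); we distinguish two cases.
\begin{itemize}
\item If \(r\in P\), then \(h\in\langle H_r\rangle\) by hypothesis, and Lemma~\ref{lem:Wh-cofinite} directly yields \(w\in W_h\) with \(r\nmid w\).
\item If \(r\notin P\), then no tile width is divisible by \(r\), so \(H_r=H\), as noted above. Hence \(h\in\langle H\rangle=\langle H_r\rangle\), and Lemma~\ref{lem:Wh-cofinite} again applies. Alternatively, the element \(\ell\) built in the nonemptiness step is already not divisible by \(r\).
\end{itemize}
In both cases \(r\nmid g\). As this holds for every prime, \(g=1\), and Lemma~\ref{lem:semigroup-cofinite} gives that \(W_h\) is cofinite.

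The main obstacle is the prime-avoiding construction itself, and that is already handled by Lemma~\ref{lem:Wh-cofinite}. What remains is only the bookkeeping above: observe that primes outside \(P\) are harmless because \(H_r=H\), and remember to check that \(W_h\) is nonempty before invoking the semigroup lemma.
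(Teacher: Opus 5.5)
Your proof is correct and follows essentially the same route as the paper: treat primes \(r\in P\) directly and primes \(r\notin P\) via \(H_r=H\), apply Lemma~\ref{lem:Wh-cofinite} to get \(\gcd(W_h)=1\), and conclude with Lemma~\ref{lem:semigroup-cofinite}. Your separate nonemptiness check is harmless but redundant, since any single application of Lemma~\ref{lem:Wh-cofinite} already produces an element of \(W_h\), which is how the paper obtains it.
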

\begin{proof}
    Let
\[
h\in \langle H\rangle \cap \bigcap_{p\in P}\langle H_p\rangle.
\]
Let \(r\) be any prime. If \(r\in P\), then \(h\in\langle H_r\rangle\). If
\(r\notin P\), then \(H_r=H\), so again \(h\in\langle H_r\rangle\). By
Lemma~\ref{lem:Wh-cofinite}, there exists \(w\in W_h\) with \(r\nmid w\).
Therefore no prime divides every element of \(W_h\), hence \(\gcd(W_h)=1\).
Since \(W_h\) is a nonempty additive semigroup, Lemma~\ref{lem:semigroup-cofinite}
implies that \(W_h\) is cofinite.

\end{proof}
    
\begin{figure}[t]
    \centering
    \begin{tikzpicture}[x=0.4cm,y=0.275cm,line cap=round,line join=round]
        \def\L{12}
        \def\hA{2}
        \def\hB{3}
        \def\hC{2}
        \def\wA{3}
        \def\wB{4}
        \def\wC{6}
        
        \pgfmathsetmacro{\Xstack}{\L+3.5}
        \pgfmathsetmacro{\Xstackright}{\Xstack+\L}
        \pgfmathsetmacro{\Htot}{\hA+\hB+\hC}
        
        \node[left] at (0,-1) {$\,(h_1,w_1)$};
        \node[left] at (0,-4.5) {$\,(h_2,w_2)$};
        \node[left] at (0,-8) {$\,(h_3,w_3)$};
        
        \draw (0,0) rectangle (\L,-\hA);
        \foreach \x in {\wA,2*\wA,3*\wA} {
            \draw (\x,0) -- (\x,-\hA);
        }
        \node at (\L/2,0.7) {Horizontal stacking};
        
        \draw[->] (\L+0.8,-1) -- (\L+2.8,-1.5);
        
        \draw (0,-3) rectangle (\L,-3-\hB);
        \foreach \x in {\wB,2*\wB} {
            \draw (\x,-3) -- (\x,-3-\hB);
        }
        
        \draw[->] (\L+0.8,-4.5) -- (\L+2.8,-4.5);
        
        \draw[->] (\L+0.8,-8) -- (\L+2.8,-7);
        
        \draw (0,-7) rectangle (\L,-7-\hC*0.8);
        \draw (\wC,-7) -- (\wC,-7-\hC*0.8);
        
        \draw[<->] (0,-10.1) -- (\L,-10.1);
        \node at (\L/2,-10.85) {$\ell=\operatorname{lcm}(w_1,w_2,w_3)$};
        
        \draw (\Xstack,-1) rectangle (\Xstackright,-\Htot-0.8);
        \draw (\Xstack,-\hA-1) -- (\Xstackright,-\hA-1);
        \draw (\Xstack,-\hA-\hB-1) -- (\Xstackright,-\hA-\hB-1);
        \node at (\Xstack+\L/2,-0.4) {Vertical stacking};
        
        \draw[<->] (\Xstackright+1.0,-1) -- (\Xstackright+1.0,-\Htot-1);
        \node[right] at (\Xstackright+1.2,-\Htot/2-1) {$h=h_1+h_2+h_3$};
        
        \draw[<->] (\Xstack,-9.2) -- (\Xstackright,-9.2);
        \node at (\Xstack+\L/2,-10) {$\ell$};

    \end{tikzpicture}
    \caption{Construction used in Lemma~\ref{lem:Wh-cofinite}: if \(h=h_1+h_2+h_3\), each chosen tile \((h_i,w_i)\) is horizontally replicated \(\ell/w_i\) times to the common width \(\ell=\operatorname{lcm}(w_1,w_2,w_3)\), and the resulting rectangles are vertically stacked to obtain a picture in \(F^{\ominus\overt +}\) of size \((h,\ell)\).}
    \label{fig:stacking-construction}
\end{figure}

The sufficiency direction is now immediate.

\begin{theorem}[Sufficiency for the guillotine closure]
    \label{thm:guillotine-suff}
    If \(\gcd(H)=1\) and \(\gcd(H_p)=1\) for every \(p\in P\), then \(F^{\ominus\overt +}\) is asymptotically cofinite.
\end{theorem}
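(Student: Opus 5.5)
The plan is to combine Corollary~\ref{cor:Wh-cofinite} with a finiteness argument. First we reduce to finitely many heights, and then we pass from individual heights to all large heights using vertical concatenation.

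\textbf{Step 1: all large heights are admissible.} By Lemma~\ref{lem:semigroup-cofinite}, the hypothesis $\gcd(H)=1$ makes $\langle H\rangle$ cofinite, since $H$ is finite and nonempty. For each $p\in P$, the set $H_p$ is finite, and $\gcd(H_p)=1$ forces $H_p\neq\emptyset$; hence $\langle H_p\rangle$ is cofinite as well. Because $F$ is finite, $P$ is finite. The set $I=\langle H\rangle\cap\bigcap_{p\in P}\langle H_p\rangle$ is therefore a finite intersection of cofinite sets, so it is cofinite. Fix $h^*$ such that every $h\ge h^*$ lies in $I$. By Corollary~\ref{cor:Wh-cofinite}, $W_h$ is cofinite for each such $h$.

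\textbf{Step 2: finitely many heights suffice.} The obstacle is that the thresholds of $W_h$ need not be uniform in $h$, so we cannot simply take a maximum over infinitely many heights. To get around this, we take only the finitely many heights $h^*,h^*+1,\dots,2h^*-1$. Let $w_0$ be the maximum of their cofiniteness thresholds, so that $(h,w)\in F^{\ominus\overt +}$ for all $h^*\le h\le 2h^*-1$ and all $w\ge w_0$.

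\textbf{Step 3: vertical stacking.} Now take any $h\ge h^*$ and $w\ge w_0$. Write $h=qh^*+s$ with $q\ge1$ and $0\le s<h^*$. Then
\[
h=(q-1)h^*+(h^*+s),
\]
where $h^*+s\in[h^*,2h^*-1]$. Stacking $q-1$ copies of $(h^*,w)$ on top of $(h^*+s,w)$ uses $\ominus$, which is legitimate because all the pieces share width $w$. This gives $(h,w)\in F^{\ominus\overt +}$. Hence $(h_0,w_0)=(h^*,w_0)$ witnesses asymptotic cofiniteness.

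The only delicate point is the non-uniformity addressed in Step 2. Alternatively, one can observe directly that for fixed $w\ge w_0$ the heights $h$ with $(h,w)$ generated form a semigroup containing an interval of length $h^*$, and so contain all $h\ge h^*$.
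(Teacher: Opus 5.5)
Your proof is correct and follows essentially the same route as the paper: make $I$ cofinite, apply Corollary~\ref{cor:Wh-cofinite}, restrict to finitely many heights to get a uniform width threshold $w_0$, and then stack vertically. The only difference is cosmetic. The paper takes two consecutive heights $h_1,h_2\in I$ and uses the cofiniteness of $\langle h_1,h_2\rangle$, whereas you take the block $h^*,\dots,2h^*-1$ and use division with remainder, which gives the slightly more explicit threshold $h_0=h^*$.
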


\begin{proof}
    By Lemma~\ref{lem:semigroup-cofinite}, the semigroup \(\langle H\rangle\) and each
    semigroup \(\langle H_p\rangle\) are cofinite in \(\mathbb N_{>0}\). Since \(P\) is
    finite, the set
    \[
    I:=\langle H\rangle \cap \bigcap_{p\in P}\langle H_p\rangle
    \]
    is cofinite in \(\mathbb N_{>0}\). For every \(h\in I\), Corollary~\ref{cor:Wh-cofinite}
    shows that \(W_h\) is cofinite.
    
    Choose two consecutive heights \(h_1,h_2\in I\). For each \(i\in\{1,2\}\), choose
    \(w_i\) such that 
    \[
    w\in W_{h_i} \text{ for all } w \ge w_i, 
    \]
    and set
    \[
    w_0=\max\{w_1,w_2\}.
    \]
    Then \((h_1,w),(h_2,w)\in F^{\ominus\overt +}\) for all \(w\ge w_0\). Since \(\gcd(h_1,h_2)=1\), there exists \(h_0\) such that every \(h\ge h_0\)
    belongs to \(\langle h_1,h_2\rangle\). Then, every sufficiently large \(h\) can be written as
    \[
    h=ah_1+bh_2
    \qquad (a,b\ge 0).
    \]
    Vertically stacking \(a\) copies of \((h_1,w)\) and \(b\) copies of \((h_2,w)\)
    yields \((h,w)\in F^{\ominus\overt +}\) for all sufficiently large \(h\) and all \(w\ge w_0\).
    Hence \(F^{\ominus\overt +}\) is asymptotically cofinite.
\end{proof}

\subsection{Necessity: a common proof for both closures}

\begin{lemma}[Roots-of-unity obstruction]
    \label{lem:root-obstruction}
    Let \(d\ge 2\) and let \(F\subseteq \mathbb N_{>0}^2\) be an arbitrary set of tiles. Define
    \[
    H_d:=\{h\ge 1:\exists (h,w)\in F,\ d\nmid w\}.
    \]
    If \(H_d=\emptyset\), then every \((h,w)\in F^{++}\) satisfies
    \(d\mid w\).
    
    \noindent If \(H_d\neq\emptyset\), then every \((h,w)\in F^{++}\) satisfies
    \[
    d\mid w
    \quad\text{or}\quad
    \gcd(H_d)\mid h.
    \]
\end{lemma}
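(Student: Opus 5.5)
We use the classical generating-function (de Bruijn/Klarner) argument. Let \((h,w)\in F^{++}\), and fix a tiling of the \(h\times w\) rectangle by translates of tiles \((h_i,w_i)\in F\), the \(i\)-th placed with its top-left cell at \((a_i,b_i)\). Fix a primitive \(d\)-th root of unity \(\zeta\) and a complex number \(z\), to be chosen later. Summing the monomial \(z^{x}\zeta^{y}\) over all cells \((x,y)\) of the rectangle, and separately over the cells of each tile, gives the identity
\[
\Bigl(\sum_{x=0}^{h-1}z^x\Bigr)\Bigl(\sum_{y=0}^{w-1}\zeta^y\Bigr)
=\sum_i z^{a_i}\zeta^{b_i}\Bigl(\sum_{x=0}^{h_i-1}z^x\Bigr)\Bigl(\sum_{y=0}^{w_i-1}\zeta^y\Bigr).
\]
For a tile with \(d\mid w_i\), the factor \(\sum_{y<w_i}\zeta^y=(\zeta^{w_i}-1)/(\zeta-1)\) vanishes. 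Consequently only tiles with \(d\nmid w_i\), whose heights lie in \(H_d\), contribute to the right-hand side.

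If \(H_d=\emptyset\), the right-hand side is \(0\). Taking \(z=1\), the left-hand side becomes \(h\cdot(\zeta^w-1)/(\zeta-1)\), so \(\zeta^w=1\) and therefore \(d\mid w\).

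If \(H_d\neq\emptyset\), let \(g=\gcd(H_d)\) and assume \(d\nmid w\); we must show \(g\mid h\). For \(g=1\) there is nothing to prove, so suppose \(g\ge 2\) and take \(z=\xi\), a primitive \(g\)-th root of unity. Every contributing tile has \(g\mid h_i\), so \(\sum_{x<h_i}\xi^x=(\xi^{h_i}-1)/(\xi-1)=0\), and the right-hand side again vanishes. The left-hand side equals \(\frac{\xi^h-1}{\xi-1}\cdot\frac{\zeta^w-1}{\zeta-1}\). Since \(d\nmid w\) and \(\zeta\) is primitive, the second factor is nonzero. Hence \(\xi^h=1\), and as \(\xi\) is primitive this gives \(g\mid h\).

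The main point to check is that the identity holds for arbitrary, including nonsliceable, tilings. This follows because the tiles partition the cells of the rectangle, so the polynomial identity is simply a regrouping of a finite sum, and translation by \((a_i,b_i)\) multiplies each tile's sum by the monomial \(z^{a_i}\zeta^{b_i}\). No guillotine structure is used anywhere. Finiteness of \(F\) is not needed either, since each individual tiling uses only finitely many tiles; this is why the lemma is stated for arbitrary \(F\).
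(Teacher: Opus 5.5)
Your proposal is correct and follows essentially the same argument as the paper. You use the same factorized roots-of-unity sum over the cells of each placed tile in an arbitrary tiling, with a primitive $g$-th root in the height direction and a primitive $d$-th root in the width direction, so every tile contributes zero. The only cosmetic difference is the case $H_d=\emptyset$: you handle it by specializing the same identity at $z=1$, whereas the paper observes directly that the tiles meeting any fixed row split the width into multiples of $d$.
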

\begin{proof}
      If \(H_d=\emptyset\), then every tile width is divisible by \(d\), and hence
every rectangle tiled by \(F\) has width divisible by \(d\).
Indeed, fixing any row of the tiled rectangle, the tiles meeting that row
partition the total width into tile widths, all divisible by \(d\).
 
Now assume that \(H_d\neq\emptyset\), and let \(g:=\gcd(H_d)\).
If \(g=1\), the conclusion is trivial.
So suppose \(g>1\).

    In a tiling, an occurrence \(T\) of a tile \((h_t,w_t)\in F\) is a concrete
rectangular block of cells \(T=R\times C\), where \(R\) is the interval of
row indices occupied by \(T\) and \(C\) is the interval of column indices
occupied by \(T\) (see Figure~\ref{fig:placed-tile}). Thus \(R\) and \(C\) are intervals of consecutive integers
with \(|R|=h_t\) and \(|C|=w_t\). 

The proof uses a complex-valued invariant based on roots of unity (see Wagon’s exposition of rectangle‑tiling arguments~\cite{Wagon1987}).

Let \(\omega=e^{2\pi i/g}\) and \(\zeta=e^{2\pi i/d}\), and define
\[
\Phi(T)=\sum_{(r,c)\in R\times C}\omega^r\zeta^c
=\Bigl(\sum_{r\in R}\omega^r\Bigr)\Bigl(\sum_{c\in C}\zeta^c\Bigr).
\]
\begin{description}
    \item[Case \(d\mid w_t\)] 
    
    If \(d\mid w_t\), then \(\sum_{c\in C}\zeta^c=0\), since \(|C|\) is divisible
    by \(d\) and the sum runs over a complete number of periods of the \(d\)-th
    roots of unity. Indeed, if \(C=\{b+1,\dots,b+w_t\}\), then
    \[
    \sum_{c\in C}\zeta^c
    =\zeta^{b+1}\sum_{s=0}^{w_t-1}\zeta^s
    =\zeta^{b+1}\frac{1-\zeta^{w_t}}{1-\zeta},
    \]
    which is zero whenever \(d\mid w_t\), because then \(\zeta^{w_t}=1\).
    
    \item[Case \(d\nmid w_t\)]
    If \(d\nmid w_t\), then \(h_t\in H_d\) and hence \(g\mid h_t\), since $g$ is the $\gcd$ of $H_d$. If
    \(R=\{a+1,\dots,a+h_t\}\), then
    \[
    \sum_{r\in R}\omega^r
    =\omega^{a+1}\sum_{s=0}^{h_t-1}\omega^s
    =\omega^{a+1}\frac{1-\omega^{h_t}}{1-\omega},
    \]
    which is zero whenever \(g\mid h_t\), because then \(\omega^{h_t}=1\).
\end{description}
In either case \(\Phi(T)=0\).

Now take any \((h,w)\in  F^{++}\) and a tiling of
\(\{1,\dots,h\}\times\{1,\dots,w\}\)
by tiles \(T_1,\dots,T_m\). Summing \(\Phi\) over the tiling gives
\[
\sum_{j=1}^m\Phi(T_j)
=
\sum_{r=1}^{h}\sum_{c=1}^{w}\omega^r\zeta^c,
\]
since the tiles partition the rectangle. Therefore,
\[
0
=
\sum_{j=1}^m\Phi(T_j)
=
\sum_{r=1}^{h}\sum_{c=1}^{w}\omega^r\zeta^c
=
\Bigl(\sum_{r=1}^{h}\omega^r\Bigr)
\Bigl(\sum_{c=1}^{w}\zeta^c\Bigr).
\]
Since \(\omega\) and \(\zeta\) are primitive roots of orders \(g\) and \(d\),
respectively, one has
\[
\sum_{r=1}^{h}\omega^r=0 \iff g\mid h,
\qquad
\sum_{c=1}^{w}\zeta^c=0 \iff d\mid w.
\]

Hence every rectangle in \(F^{++}\) satisfies

\begin{equation}\label{eq:divide}
    d\mid w \quad\text{or}\quad g\mid h.
\end{equation}

\end{proof}

\begin{figure}[t]
    \centering
    \begin{tikzpicture}[x=0.5cm,y=0.5cm,line cap=round,line join=round]
        \draw[thick] (0,0) rectangle (10,7);
        
        \draw (2,0) -- (2,7);
        \draw (6,0) -- (6,7);
        \draw (0,2) -- (10,2);
        \draw (0,5) -- (10,5);
        
        \filldraw[fill=gray!20] (2,2) rectangle (6,5);
        \node at (4,3.5) {$T=R\times C$};
        
        \draw[<->] (2,1.4) -- (6,1.4);
        \node at (4,1.05) {$w_t=|C|$};
        
        \draw[<->] (6.6,2) -- (6.6,5);
        \node[right] at (6.8,3.5) {$h_t=|R|$};
        
        \node[below] at (2,0) {$b+1$};
        \node[below] at (6,0) {$b+w_t$};
        \node[left] at (0,2) {$a+1$};
        \node[left] at (0,5) {$a+h_t$};
        
        \node[below] at (4,-0.55) {$C=\{b+1,\dots,b+w_t\}$};
        \node[left,rotate=90] at (11,6) {$R=\{a+1,\dots,a+h_t\}$};
    \end{tikzpicture}
    \caption{A tile \(T\) placed in a tiling. The interval \(R\) records the row indices occupied by \(T\), and \(C\) records the column
        indices. In the proof of Lemma~\ref{lem:root-obstruction}, for a fixed divisor \(d\ge 2\) and \(g=\gcd(H_d)\), one associates with
        \(T\) the quantity
        \(\displaystyle
        \Phi(T)=\sum_{(r,c)\in R\times C}\omega^r\zeta^c.
        \)
        If \(d\mid w_t\), the sum over \(C\) vanishes; otherwise \(h_t\in H_d\), so
        \(g\mid h_t\) and the sum over \(R\) vanishes.}
    \label{fig:placed-tile}
\end{figure}

Let \(S\) be either the guillotine closure \(F^{\ominus\overt +}\) or the tiling closure
\(F^{++}\) of a finite tile set $F$. The argument below uses only the fact that every element of \(S\)
admits a tiling by tiles from \(F\); this holds for \(F^{++}\) by definition and for
\(F^{\ominus\overt +}\) because every guillotine decomposition is a tiling. Therefore the necessity
is established for both closures simultaneously.

\begin{theorem}[Necessity]
    \label{thm:necessity}
    Let \(S\in\{F^{\ominus\overt +},F^{++}\}\). If \(S\) is asymptotically cofinite, then
    \[
    \gcd(H)=1
    \qquad\text{and}\qquad
    \gcd(H_p)=1 \text{ for every } p\in P.
    \]
\end{theorem}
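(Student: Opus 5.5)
Since $S\subseteq F^{++}$ in both cases, every element of $S$ admits a tiling by tiles from $F$. It therefore suffices to derive the arithmetic conditions from the assumption that $S$ contains a quadrant $\{(h,w): h\ge h_0,\ w\ge w_0\}$, using only the fact that these pairs lie in $F^{++}$.

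For $\gcd(H)=1$, I would argue directly. Let $g=\gcd(H)$. In any tiling of an $h\times w$ rectangle, fix a column. The tiles meeting that column partition its $h$ cells into intervals whose lengths are tile heights, so $h\in\langle H\rangle$ and hence $g\mid h$. The quadrant contains $(h_0,w_0)$ and $(h_0+1,w_0)$, so $g$ divides two consecutive integers and $g=1$. Equivalently, one can invoke Lemma~\ref{lem:root-obstruction} with any $d$ larger than every tile width. Then $H_d=H$, and picking $w\ge w_0$ with $d\nmid w$ forces $\gcd(H)\mid h$ for all $h\ge h_0$.

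For $p\in P$, I would apply Lemma~\ref{lem:root-obstruction} with $d=p$. Suppose first that $H_p=\emptyset$. The lemma gives $p\mid w$ for every $(h,w)\in F^{++}$, which contradicts $(h_0,w)\in S$ for a $w\ge w_0$ with $p\nmid w$ (for instance $w$ or $w+1$). So $H_p\neq\emptyset$. Let $g_p=\gcd(H_p)$ and choose $w\ge w_0$ with $p\nmid w$. For every $h\ge h_0$ we have $(h,w)\in S\subseteq F^{++}$, so the lemma yields $g_p\mid h$. Taking $h=h_0$ and $h=h_0+1$ gives $g_p=1$.

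There is no real obstacle here, since the roots-of-unity lemma already carries the weight. The only care needed is to treat the case $H_p=\emptyset$ separately, where $\gcd(\emptyset)=0\neq 1$ by the convention above, and to note that the quadrant always contains a width not divisible by $p$ together with two consecutive heights.
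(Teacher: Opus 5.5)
Your proof is correct and follows essentially the same route as the paper: you get $\gcd(H)=1$ because every tiled height is a sum of tile heights, and you get $\gcd(H_p)=1$ from Lemma~\ref{lem:root-obstruction} with $d=p$, after handling $H_p=\emptyset$ separately. The only differences are cosmetic: you argue directly, fixing a width $w\ge w_0$ with $p\nmid w$ and using two consecutive heights, where the paper argues by contradiction, and your column-partition argument spells out a step the paper merely asserts.
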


\begin{proof}
    Assume that \(S\) is asymptotically cofinite.
     If \(\gcd(H)>1\), then every tile has height divisible by \(\gcd(H)\).
    Thus every picture in \(S\) has height
    divisible by \(\gcd(H)\), contradicting asymptotic cofiniteness. Hence \(\gcd(H)=1\).
    
    Suppose by contradiction that there exists \(p\in P\) such that
    \(g:=\gcd(H_p)\neq 1\).
    
    If \(H_p=\emptyset\), then every tile has width divisible by \(p\). Thus, every picture in \(S\) has width
    divisible by \(p\), contradicting asymptotic cofiniteness. Thus \(H_p\neq\emptyset\), and
    $g>1$.
    
    By Lemma~\ref{lem:root-obstruction}, every element \((h,w)\in S\) satisfies
    \[
    p\mid w \quad\text{or}\quad g\mid h.
    \]
   
    Given any threshold \(h_0,w_0\), choose \(h\ge h_0\) with \(g\nmid h\) and
    \(w\ge w_0\) with \(p\nmid w\). Then \((h,w)\notin S\). Hence \(S\) is not
    asymptotically cofinite, a contradiction. Therefore, \(\gcd(H_p)=1\).
\end{proof}

\begin{remark}
    For the guillotine closure alone, the necessity of the prime conditions can be
    proved by a simple induction on the decomposition tree. Indeed, if for some
    \(p\in P\) one has \(H_p\neq\emptyset\) and \(g:=\gcd(H_p)>1\), then every tile
    \((h,w)\in F\) satisfies
    \[
    p\mid w \quad\text{or}\quad g\mid h.
    \]
    This condition is preserved by both horizontal and vertical concatenation: in a
    horizontal concatenation the height is unchanged and the widths add, while in a
    vertical concatenation the width is unchanged and the heights add. Hence the same divisibility condition holds for every picture in
    \(F^{\ominus\overt +}\).
    
    This tree-induction argument is not available for the full tiling closure, where
    a tiling need not admit a first full-width or full-height cut, as in windmill
    tilings. The roots-of-unity argument above replaces the missing structural
    induction by an additive invariant over all placed tiles.
\end{remark}

\iflong

\subsection{The final characterization}

Combining Theorems~\ref{thm:guillotine-suff} and \ref{thm:necessity}, and using
\(F^{\ominus\overt +}\subseteq F^{++}\), we obtain the full result.

\begin{corollary}[Unified characterization]
	\label{cor:unified}
	For a finite tile set \(F\subseteq\mathbb N_{>0}^2\), the following are equivalent:
	\begin{enumerate}
		\item \(F^{\ominus\overt +}\) is asymptotically cofinite;
		\item \(F^{++}\) is asymptotically cofinite;
		\item \(\displaystyle\gcd(H)=1\) and \(\gcd(H_{p})=1\) for every prime \(p\) dividing
		some tile width.
	\end{enumerate}
\end{corollary}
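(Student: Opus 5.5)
The plan is to close the cycle of implications $(3)\Rightarrow(1)\Rightarrow(2)\Rightarrow(3)$, using results already established in this section.

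\emph{$(3)\Rightarrow(1)$.} First check that condition (3) is exactly the hypothesis of Theorem~\ref{thm:guillotine-suff}. The primes dividing some tile width are precisely the elements of \(P\), so (3) says \(\gcd(H)=1\) and \(\gcd(H_p)=1\) for all \(p\in P\). Theorem~\ref{thm:guillotine-suff} then gives that \(F^{\ominus\overt +}\) is asymptotically cofinite.

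\emph{$(1)\Rightarrow(2)$.} This is monotonicity. Since \(F^{\ominus\overt +}\subseteq F^{++}\), any quadrant \(\{(h,w): h\ge h_0,\ w\ge w_0\}\) contained in \(F^{\ominus\overt +}\) is also contained in \(F^{++}\).

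\emph{$(2)\Rightarrow(3)$.} Apply Theorem~\ref{thm:necessity} with \(S=F^{++}\). It yields \(\gcd(H)=1\) and \(\gcd(H_p)=1\) for every \(p\in P\), which is (3).

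None of these steps is a real obstacle; the substantive work was done in Lemma~\ref{lem:Wh-cofinite} and Lemma~\ref{lem:root-obstruction}. The only point needing care is the bookkeeping that identifies ``primes dividing some tile width'' with \(P\). It is also worth noting that primes outside \(P\) impose no extra condition, because for them \(H_p=H\). This is why quantifying over \(P\) alone in (3) loses nothing compared with quantifying over all primes.
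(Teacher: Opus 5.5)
Your proposal is correct and matches the paper's own proof: $(3)\Rightarrow(1)$ by Theorem~\ref{thm:guillotine-suff}, $(1)\Rightarrow(2)$ from the inclusion $F^{\ominus\overt +}\subseteq F^{++}$, and $(2)\Rightarrow(3)$ by Theorem~\ref{thm:necessity}. The paper additionally notes that Theorem~\ref{thm:necessity} gives $(1)\Rightarrow(3)$ directly, and your identification of ``primes dividing some tile width'' with $P$ is exactly the bookkeeping needed.
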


\begin{proof}
	\((3)\Rightarrow(1)\) is Theorem~\ref{thm:guillotine-suff}. Since
	\(F^{\ominus\overt +}\subseteq F^{++}\), this implies \((1)\Rightarrow(2)\).
	Finally, Theorem~\ref{thm:necessity} gives both \((1)\Rightarrow(3)\) and
	\((2)\Rightarrow(3)\).
\end{proof}

Although the condition is stated asymmetrically, it already forces the
one-dimensional width gcd condition: if all tile widths were divisible by a
prime \(p\), then \(p\in P\) and \(H_p=\emptyset\), contradicting
\(\gcd(H_p)=1\). The transposed condition obtained by exchanging height and
width is therefore equivalent by symmetry.

By Corollary~\ref{cor:unified}, the guillotine and tiling closures are asymptotically cofinite under exactly the same arithmetic condition. Since
\(
F^{\ominus\overt +}\subseteq F^{++},
\)
it follows that whenever this condition holds, both closures contain all sizes larger than some common rectangle; in particular, they are equal there.
Therefore, if the two closures are asymptotically cofinite, then for all sufficiently large unary pictures, the ability to tile a rectangle is no more powerful than the ability to build it by simple guillotine concatenations. 

\section{Extension to arbitrary tile sets}
\label{s-infinite}

We now extend the cofiniteness result to the case where the tile set  may be infinite.

We recall the definition of a Klarner system. 

\begin{definition}[Klarner system]\label{def:Klarner}
    A set \(S\subseteq \mathbb N_{>0}^2\) is a \emph{Klarner system} if it is closed
    under horizontal and vertical composition: whenever the indicated rectangles
    belong to \(S\),
    \[
    (h,w_1),(h,w_2)\in S
    \quad\Longrightarrow\quad
    (h,w_1+w_2)\in S,
    \]
    and
    \[
    (h_1,w),(h_2,w)\in S
    \quad\Longrightarrow\quad
    (h_1+h_2,w)\in S.
    \]
\end{definition}

Equivalently, \(S\) is a set of rectangle sizes closed under the two partial
operations \(\overt\) and \(\ominus\).

A picture language \(L\) is \emph{\(T\)-closed} if \(L=L^{++}\), and
\emph{\(C\)-closed} if \(L=L^{\ominus\overt +}\).
In the unary case, we identify a picture with its size
\((h,w)\in\mathbb N_{>0}^2\). 
 Under this identification, every \(C\)-closed
language is a Klarner system. Since every \(T\)-closed language is also
\(C\)-closed, unary \(T\)-closed languages are Klarner systems as well.

We shall use the finite-basis theorem for Klarner systems, e.g.,
Theorem~2.7 of~\cite{Reid2005KlarnerSystems}. An element \(x\) of
a Klarner system \(S\) is called \emph{prime} if \(S\setminus\{x\}\) is still
a Klarner system, equivalently if \(x\) cannot be obtained from elements of
\(S\setminus\{x\}\) by horizontal or vertical concatenation. We let
\(
\Pi(S)
\)
denote the set of prime elements of \(S\). The finite-basis theorem states that
\(\Pi(S)\) is finite and that it generates \(S\) by horizontal and vertical
concatenation.

\begin{proposition}[Finite Klarner reduction]
    \label{prop:finite-klarner-reduction}
    Let \(I\subseteq\mathbb N_{>0}^2\) be arbitrary, and let
    \[
    S_C:=I^{\ominus\overt +},
    \qquad
    S_T:=I^{++}.
    \]
    Then
    \[
    S_C=\Pi(S_C)^{\ominus\overt +}
    \qquad\text{and}\qquad
    S_T=\Pi(S_T)^{\ominus\overt +}.
    \]
\end{proposition}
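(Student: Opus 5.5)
The plan is to reduce the statement to two facts: \(S_C\) and \(S_T\) are Klarner systems, and the finite-basis theorem for Klarner systems (Theorem~2.7 of~\cite{Reid2005KlarnerSystems}) applies to any Klarner system. Both identities then follow by the same argument, applied once to \(S=S_C\) and once to \(S=S_T\).

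\textbf{Step 1: both sets are Klarner systems.} For \(S_C=I^{\ominus\overt +}\) this holds by definition, since the guillotine closure is the smallest subset containing \(I\) and closed under the two partial operations. For \(S_T=I^{++}\) I will check closure directly.
\begin{itemize}
\item If \((h,w_1),(h,w_2)\in I^{++}\), put a tiling of \(\{1,\dots,h\}\times\{1,\dots,w_1\}\) side by side with a tiling of \(\{1,\dots,h\}\times\{1,\dots,w_2\}\), the second shifted by \(w_1\) columns.
\item The union of the two tilings is a partition of the \((h,w_1+w_2)\) rectangle into tiles with sizes in \(I\). Hence \((h,w_1+w_2)\in I^{++}\).
\item Vertical composition is handled the same way, shifting rows instead of columns.
\end{itemize}
So \(S_T\) is closed under \(\overt\) and \(\ominus\), i.e.\ it is a Klarner system. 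This is the paper's earlier remark that \(T\)-closed languages are \(C\)-closed.

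\textbf{Step 2: apply the finite-basis theorem.} Let \(S\) be either Klarner system. The theorem says \(\Pi(S)\) is finite and generates \(S\) under horizontal and vertical concatenation, i.e.\ \(S\subseteq\Pi(S)^{\ominus\overt +}\). For the reverse inclusion, note \(\Pi(S)\subseteq S\) and \(S\) is closed under both operations. Since \(\Pi(S)^{\ominus\overt +}\) is the smallest such closed set containing \(\Pi(S)\), we get \(\Pi(S)^{\ominus\overt +}\subseteq S\). Together these give \(S=\Pi(S)^{\ominus\overt +}\).

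\textbf{Main obstacle.} There is little difficulty beyond invoking the cited theorem. The one point to check is that Reid's notion of generation is exactly closure under the two concatenations, starting from the primes. If the cited formulation only gives finiteness of \(\Pi(S)\), I would supply generation by induction on area \(h+w\) (or \(hw\)). Take \(x\in S\). If \(x\) is prime, it lies in \(\Pi(S)\). Otherwise, by the definition of prime, \(x\) is a horizontal or vertical concatenation of two elements of \(S\setminus\{x\}\). Each of these has strictly smaller area, so each lies in \(\Pi(S)^{\ominus\overt +}\) by induction, and therefore so does \(x\). This induction gives \(S\subseteq\Pi(S)^{\ominus\overt +}\) independently of the theorem, which is needed only for finiteness of \(\Pi(S)\).
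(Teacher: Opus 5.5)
Your proposal is correct and follows the paper's own proof: show that \(S_C\) and \(S_T\) are Klarner systems, then invoke Reid's finite-basis theorem. Your tiling-juxtaposition check for \(S_T\) and the minimality argument for \(\Pi(S)^{\ominus\overt +}\subseteq S\) spell out steps the paper leaves implicit, and your fallback induction on area is sound. That induction also makes clear that the cited theorem is needed only for the finiteness of \(\Pi(S)\), not for the equalities themselves.
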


\begin{proof}
    Both \(S_C\) and \(S_T\) are Klarner systems. By the finite-basis theorem
    for Klarner systems, \(\Pi(S_C)\) and \(\Pi(S_T)\) are finite and generate
    \(S_C\) and \(S_T\), respectively, by horizontal and vertical concatenation.
\end{proof}

Notice that \(\Pi(S_C)\subseteq I\). Indeed, if
\(x\in I^{\ominus\overt +}\setminus I\), then in a construction of \(x\) of
minimal length from \(I\), the last step is a nontrivial horizontal or
vertical concatenation. Hence \(x\) is not prime in the Klarner system
\(S_C\). For \(S_T=I^{++}\), the analogous inclusion need not hold: a prime
element of \(S_T\) may arise from a nonsliceable tiling by elements of \(I\).
However, every element of \(\Pi(S_T)\) is tileable by finitely many elements
of \(I\).

We now show that the characterization for infinite tile sets is the same as
in the finite case. We then briefly comment on decidability for infinite sets.

\begin{theorem}[Cofiniteness for arbitrary tile sets]
    \label{thm:cofinite-arbitrary}
    Let \(I\subseteq \mathbb N_{>0}^2\) be a nonempty, not necessarily finite, set of tiles.
    Let
    \[
    S_C:=I^{\ominus\overt +},
    \qquad
    S_T:=I^{++}.
    \]
    Let
    \[
    H:=\{h\ge 1:\exists w\ge 1,\ (h,w)\in I\},
    \]
    and let
    \[
    P:=\{p:\ p\text{ is prime and }p\mid w
    \text{ for some }(h,w)\in I\}.
    \]
    For \(p\in P\), define
    \[
    H_p:=\{h\ge 1:\exists (h,w)\in I,\ p\nmid w\}.
    \]
    Then the following are equivalent:
    \begin{enumerate}
        \item \(S_C\) is asymptotically cofinite;
        \item \(S_T\) is asymptotically cofinite;
        \item
        \[
        \gcd(H)=1
        \qquad\text{and}\qquad
        \gcd(H_p)=1 \text{ for every } p\in P.
        \]
    \end{enumerate}
\end{theorem}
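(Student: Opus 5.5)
The plan is to reduce to the finite case via a finite subset $F\subseteq I$ that already witnesses the arithmetic condition, and to handle necessity directly with Lemma~\ref{lem:root-obstruction}, which is stated for arbitrary tile sets. I will show $(3)\Rightarrow(1)\Rightarrow(2)\Rightarrow(3)$. The middle implication is immediate from $S_C\subseteq S_T$.

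\emph{Necessity} $(2)\Rightarrow(3)$. Assume $S_T$ is asymptotically cofinite.

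If $\gcd(H)=g>1$, every tile height is divisible by $g$. Summing tile heights along any column of a tiling shows that every element of $S_T$ has height divisible by $g$, which is a contradiction.

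Now fix $p\in P$ and suppose $\gcd(H_p)\ne 1$. If $H_p=\emptyset$, Lemma~\ref{lem:root-obstruction} says every $(h,w)\in S_T$ has $p\mid w$. Otherwise, with $g=\gcd(H_p)>1$, every $(h,w)\in S_T$ has $p\mid w$ or $g\mid h$. In both cases, choosing $h\ge h_0$ with $g\nmid h$ and $w\ge w_0$ with $p\nmid w$ gives a size missing from $S_T$. This is exactly the argument of Theorem~\ref{thm:necessity}, which never used finiteness of $F$.

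\emph{Sufficiency} $(3)\Rightarrow(1)$. The key step is to extract a finite $F\subseteq I$ such that $H(F)$, $P(F)$ and $H_p(F)$ satisfy condition (3) of Corollary~\ref{cor:unified}. Then $F^{\ominus\overt +}\subseteq S_C$ is asymptotically cofinite, and so is $S_C$. The obstacle is that $P$ may be infinite, so we cannot simply collect one witness per prime. I would build $F$ in three steps.

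\begin{enumerate}
\item Since $\gcd(H)=1$, pick finitely many tiles in $I$ whose heights have gcd $1$, and call this set $F_0$.
\item Let $P_0$ be the finite set of primes dividing some width of a tile in $F_0$. For each $p\in P_0$, the hypothesis $\gcd(H_p)=1$ lets us add finitely many tiles with widths not divisible by $p$ whose heights have gcd $1$. Call the enlarged set $F_1$.
\item The new tiles may introduce new primes. I will avoid iterating by observing that a prime not in $P_0$ divides no width of $F_0$, so $H_p(F)\supseteq H(F_0)$ for any $F\supseteq F_0$ and any prime $p\notin P_0$. Hence $\gcd(H_p(F))=1$ automatically for those primes.
\end{enumerate}

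Taking $F=F_1$, every prime in $P(F)$ either lies in $P_0$, where the witnesses added in step 2 give $\gcd(H_p(F))=1$, or lies outside $P_0$, where step 3 gives the same. We also have $\gcd(H(F))=1$ by step 1. Theorem~\ref{thm:guillotine-suff} then applies to $F$.

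An alternative route would use Proposition~\ref{prop:finite-klarner-reduction} with $\Pi(S_C)\subseteq I$ finite and generating $S_C$. Condition (3) for $\Pi(S_C)$ would follow from (3) for $I$, because every tile of $I$ is a guillotine composition of prime elements and divisibility obstructions propagate. However, this needs the contrapositive of the obstruction lemma applied to $\Pi(S_C)$, so the direct finite-witness construction above is cleaner.
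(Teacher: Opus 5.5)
Your proof is correct. The necessity half coincides with the paper's, since Lemma~\ref{lem:root-obstruction} is already stated for arbitrary tile sets. Your sufficiency argument, however, takes a genuinely different route. The paper invokes the finite-basis theorem for Klarner systems (Proposition~\ref{prop:finite-klarner-reduction}) to replace $I$ by the finite generating sets $\Pi(S_C)$ and $\Pi(S_T)$. It then argues by contradiction: if the gcd conditions failed for such a basis, the finite-case necessity proof would give an obstruction valid on all of $S_C$ (resp.\ $S_T$), hence on $I$, and a short case analysis shows this contradicts (3) for $I$. That is essentially the alternative you sketch and set aside at the end.

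You instead build a finite $F_1\subseteq I$ that satisfies the finite criterion directly. Your observation that $H_p(F_1)\supseteq H(F_0)$ for every prime $p\notin P_0$ is exactly what neutralizes the possible infinitude of $P$. Monotonicity of the guillotine closure under $F_1\subseteq I$ then gives (1), and (2) follows from $S_C\subseteq S_T$.

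Your route is more elementary and self-contained:
\begin{itemize}
\item it needs no external theorem;
\item it needs no transfer of obstructions from a basis back to $I$;
\item it needs no separate treatment of $S_T$, which the paper handles via $\Pi(S_T)$ even though it already follows from $S_C\subseteq S_T$;
\item the finite subset is assembled explicitly from finitely many gcd witnesses rather than taken from a non-constructive basis.
\end{itemize}
The paper's route, in exchange, yields a stronger structural fact: a finite subset of $I$, namely $\Pi(S_C)$, generates all of $S_C$ and not merely a quadrant. It also ties the result to the Klarner-system framework used in the concluding discussion of Barnes' theorems.
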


\begin{proof}
    We first prove necessity. Since \(S_C\subseteq S_T\), it is enough to prove
    the necessary conditions assuming that \(S_T\) is asymptotically cofinite.
    
    If
    \(
    g:=\gcd(H)>1,
    \)
    then every tile height is divisible by \(g\). Hence every rectangle tiled by
    \(I\) has height divisible by \(g\), contradicting
    asymptotic cofiniteness. Therefore
    \(
    \gcd(H)=1.
    \)
    
    Now let \(p\in P\). If \(H_p=\emptyset\), then every tile width is divisible
    by \(p\). Hence every rectangle tiled by \(I\) has width divisible by \(p\),
    because the tiles meeting any fixed row partition the total width into tile
    widths. This again contradicts asymptotic cofiniteness. Thus
    \(H_p\neq\emptyset\).
    
    Suppose by contradiction that
    \(
    g_p:=\gcd(H_p)>1.
    \)
    By Lemma~\ref{lem:root-obstruction}, applied with \(d=p\), every
    \((h,w)\in S_T\) satisfies
    \[
    p\mid w
    \quad\text{or}\quad
    g_p\mid h.
    \]
    Therefore all rectangles \((h,w)\) with
    \(
    p\nmid w
    \qquad\text{and}\qquad
    g_p\nmid h
    \)
    are missing from \(S_T\). Since there are arbitrarily large such rectangles,
    \(S_T\) cannot be asymptotically cofinite. This contradiction gives
    \(
    \gcd(H_p)=1.
    \)
    
    To prove sufficiency, assume that the gcd conditions in (3) hold for \(I\). By
    Proposition~\ref{prop:finite-klarner-reduction}, the finite sets
    \[
    G_C:=\Pi(S_C),
    \qquad
    G_T:=\Pi(S_T)
    \]
    satisfy
    \[
    S_C=G_C^{\ominus\overt +},
    \qquad
    S_T=G_T^{\ominus\overt +}.
    \]
    
    For a finite set \(G\), let \(H^G\), \(H_p^G\), and \(P^G\) denote,
    respectively, the sets \(H\), \(H_p\), and \(P\) computed from \(G\)
    rather than from \(I\).
     We claim that \(G_C\) satisfies the finite gcd conditions of
    Corollary~\ref{cor:unified}, namely: 
    \[
    \gcd(H^{G_C})=1,\qquad \gcd(H^{G_C}_p)=1
    \quad(p\in P^{G_C}).
    \] If one of these conditions failed for \(G_C\),
    then the finite-case necessity proof, applied to the finite tile set \(G_C\), would give one of the following obstructions
    holding for every element of
    \(
    G_C^{\ominus\overt +}=S_C:
    \)
    \begin{enumerate}
        \item all heights are divisible by some integer \(g>1\);
        \item for some prime \(p\), all widths are divisible by \(p\);
        \item for some prime \(p\) and some integer \(g>1\), every
        \((h,w)\in S_C\) satisfies
        \(
        p\mid w
        \quad\text{or}\quad
        g\mid h.
        \)
    \end{enumerate}
    Since \(I\subseteq S_C\), the same obstruction holds for all tiles of
    \(I\). In case (1), this contradicts \(\gcd(H)=1\). In case (2), every
    width of a tile in \(I\) is divisible by \(p\); hence \(p\in P\) and
    \(H_p=\emptyset\), contradicting \(\gcd(H_p)=1\). In case (3), if
    \(p\notin P\), then no width of a tile in \(I\) is divisible by \(p\), so
    all tile heights are divisible by \(g\), contradicting \(\gcd(H)=1\). If
    \(p\in P\), then all heights in \(H_p\) are divisible by \(g\), contradicting
    \(\gcd(H_p)=1\). Thus \(G_C\) satisfies the finite gcd conditions.
    
    By Corollary~\ref{cor:unified},
    \(
    S_C=G_C^{\ominus\overt +}
    \)
    is asymptotically cofinite.
    
    The same argument applies to \(G_T\). 
    Indeed, if the finite gcd conditions failed for \(G_T\), the finite-case necessity proof applied to \(G_T\) would yield an obstruction holding for every element of \(G_T^{\ominus\overt +}=S_T\). 
    Since \(I\subseteq S_T\), the same obstruction would hold for all tiles of \(I\), contradicting the assumed gcd conditions for \(I\). Hence \(G_T\) also satisfies the finite gcd conditions.
    
    Corollary~\ref{cor:unified} gives that
    \(
    S_T=G_T^{\ominus\overt +}
    \)
    is asymptotically cofinite.
    
    This proves the equivalence.
\end{proof}

\begin{remark}
    For infinite \(I\), the sets \(H\) and \(H_p\) may themselves be infinite.
    Nevertheless, only the additive semigroups \(\langle H\rangle\) and
    \(\langle H_p\rangle\), for \(p\in P\), occur in criterion (3) of Theorem~\ref{thm:cofinite-arbitrary}. 
    These are additive
    subsemigroups of \(\mathbb N_{>0}\), hence finitely generated. In particular,
    \(\gcd(H)=1\), respectively \(\gcd(H_p)=1\), is equivalent to cofiniteness of
    \(\langle H\rangle\), respectively \(\langle H_p\rangle\), and is witnessed by
    finitely many elements of \(H\), respectively \(H_p\). What may remain infinite
    is the family of primes \(p\in P\) for which the condition has to be checked. 
    The criterion is effective only under
    additional assumptions ensuring that \(H\), \(P\), and the sets \(H_p\) can be
    effectively analyzed. For example, if \(I\) is given as a recognizable picture language, one may ask
    whether the sets \(H\), \(P\), and \(H_p\) are effectively analyzable, and hence
    whether the above gcd criterion becomes decidable. This is independent of the
    finite-basis reduction used here and is left open.

    Notice that the finite-basis theorem for Klarner systems is used in the proof only as an
    existence result. In general, the bases \(\Pi(S_C)\) and \(\Pi(S_T)\) are not
    provided constructively by the theorem, and need not be computable from an
    arbitrary description of an infinite tile set \(I\). Thus the argument reduces
    the mathematics to the finite case, but it does not by itself yield an effective
    procedure for checking cofiniteness. 
\end{remark}
\fi
\section{Conclusion}\label{s-concl}

We studied the cofiniteness problem for unary picture languages generated by
rectangular tiles. Two natural closures arise in this setting: the guillotine
closure, obtained by repeated horizontal and vertical concatenation, and the
tiling closure, obtained by arbitrary tilings. Our main result shows that these
two closures have the same asymptotic cofiniteness behaviour. More precisely,
the criterion of Corollary~\ref{cor:unified}, extended to arbitrary tile sets
in Theorem~\ref{thm:cofinite-arbitrary}, is necessary and sufficient for both
closures.

This result is closely related to the theory of Klarner systems. For a finite tile set \(F\), both \(F^{\ominus\overt+}\) and \(F^{++}\) define two-dimensional Klarner systems of rectangle sizes.
 Barnes' characterization theorem, in the form presented as
Theorem~3.3 in~\cite{Reid2008Barnes}, implies that every Klarner system is eventually
characterized by a finite set of divisibility restrictions. Consequently, for a
Klarner system, asymptotic cofiniteness is equivalent to the absence of
nontrivial restrictions. Our contribution is to make this criterion explicit for
the systems generated by a finite set of rectangular tiles, translating it into
simple finite arithmetic conditions. These
conditions were obtained here by a direct analysis of guillotine compositions
and tiling obstructions; nevertheless, in hindsight, Theorems~3.3 and~3.6
of~\cite{Reid2008Barnes} could also be used to give an alternative proof of the same
gcd characterization, once the absence of nontrivial restrictions is translated
into our arithmetic conditions.

The same viewpoint also clarifies the relation between guillotine and arbitrary
tiling closures beyond the cofinite case. By Barnes' preservation theorem for
restrictions (Theorem~3.6 of~\cite{Reid2008Barnes}) the two closures satisfy the same
divisibility restrictions. Hence, by Barnes' characterization theorem, their
membership conditions are eventually determined by the same finite set of
restrictions. It follows that \(F^{\ominus\overt+}\) and \(F^{++}\) coincide for
all sufficiently large rectangle sizes. Although we do not use this general eventual
equivalence in our proofs, it shows that the asymptotic
agreement of the two closures is not limited to the cofinite case.

A natural open problem is to extend the characterization to non-unary alphabets.
In one dimension, related cofiniteness questions have been considered for word
languages over general alphabets~\cite{DBLP:conf/stacs/KaoSX08}. For pictures,
the situation is more delicate: cofiniteness would require generating all
sufficiently large labeled pictures, not only all sufficiently large rectangle
sizes, and general tiling and universality problems are often undecidable.

\paragraph{Acknowledgements} We are grateful to the anonymous reviewers for their comments. In particular,
one reviewer pointed out the close connection with Reid's presentation of
Barnes' results, and another reviewer observed that the eventual equivalence of
the guillotine and tiling closures follows from those results.

\bibliographystyle{eptcs}
 \bibliography{automatabib}

\begin{thebibliography}{10}
\providecommand{\bibitemdeclare}[2]{}
\providecommand{\surnamestart}{}
\providecommand{\surnameend}{}
\providecommand{\urlprefix}{Available at }
\providecommand{\url}[1]{\texttt{#1}}
\providecommand{\href}[2]{\texttt{#2}}
\providecommand{\urlalt}[2]{\href{#1}{#2}}
\providecommand{\doi}[1]{doi:\urlalt{https://doi.org/#1}{#1}}
\providecommand{\eprint}[1]{arXiv:\urlalt{https://arxiv.org/abs/#1}{#1}}
\providecommand{\bibinfo}[2]{#2}

\bibitemdeclare{article}{DBLP:journals/combtheory/AsinowskiCFF25}
\bibitem{DBLP:journals/combtheory/AsinowskiCFF25}
\bibinfo{author}{Andrei \surnamestart Asinowski\surnameend},
  \bibinfo{author}{Jean \surnamestart Cardinal\surnameend},
  \bibinfo{author}{Stefan \surnamestart Felsner\surnameend} \&
  \bibinfo{author}{{\'{E}}ric \surnamestart Fusy\surnameend}
  (\bibinfo{year}{2025}): \emph{\bibinfo{title}{Combinatorics of
  rectangulations: old and new bijections}}.
\newblock {\slshape \bibinfo{journal}{Comb. Theory}}
  \bibinfo{volume}{5}(\bibinfo{number}{1}), \doi{10.5070/C65165025}.

\bibitemdeclare{inproceedings}{BMV2007}
\bibitem{BMV2007}
\bibinfo{author}{Alberto \surnamestart Bertoni\surnameend},
  \bibinfo{author}{Massimiliano \surnamestart Goldwurm\surnameend} \&
  \bibinfo{author}{Violetta \surnamestart Lonati\surnameend}
  (\bibinfo{year}{2007}): \emph{\bibinfo{title}{On the Complexity of Unary
  Tiling-Recognizable Picture Languages}}.
\newblock In \bibinfo{editor}{Wolfgang \surnamestart Thomas\surnameend} \&
  \bibinfo{editor}{Pascal \surnamestart Weil\surnameend}, editors: {\slshape
  \bibinfo{booktitle}{{STACS} 2007, 24th Annual Symposium on Theoretical
  Aspects of Computer Science, Aachen, Germany, February 22-24, 2007,
  Proceedings}}, \bibinfo{series}{Lecture Notes in Computer Science},
  \bibinfo{publisher}{Springer}, pp. \bibinfo{pages}{381--392},
  \doi{10.1007/978-3-540-70918-3\_33}.

\bibitemdeclare{article}{DBLP:journals/tcs/CherubiniCPP06}
\bibitem{DBLP:journals/tcs/CherubiniCPP06}
\bibinfo{author}{Alessandra \surnamestart Cherubini\surnameend},
  \bibinfo{author}{Stefano \surnamestart Crespi{-}Reghizzi\surnameend},
  \bibinfo{author}{Matteo \surnamestart Pradella\surnameend} \&
  \bibinfo{author}{Pierluigi \surnamestart {San Pietro}\surnameend}
  (\bibinfo{year}{2006}): \emph{\bibinfo{title}{Picture languages: Tiling
  systems versus tile rewriting grammars}}.
\newblock {\slshape \bibinfo{journal}{Theoret. Comput. Sci.}}
  \bibinfo{volume}{356}(\bibinfo{number}{1-2}), pp. \bibinfo{pages}{90--103},
  \doi{10.1016/j.tcs.2006.01.038}.

\bibitemdeclare{incollection}{DBLP:books/ems/21/Crespi-ReghizziGL21}
\bibitem{DBLP:books/ems/21/Crespi-ReghizziGL21}
\bibinfo{author}{Stefano \surnamestart {Crespi Reghizzi}\surnameend},
  \bibinfo{author}{Dora \surnamestart Giammarresi\surnameend} \&
  \bibinfo{author}{Violetta \surnamestart Lonati\surnameend}
  (\bibinfo{year}{2021}): \emph{\bibinfo{title}{Two-dimensional models}}.
\newblock In \bibinfo{editor}{Jean{-}{\'{E}}ric \surnamestart Pin\surnameend},
  editor: {\slshape \bibinfo{booktitle}{Handbook of Automata Theory}},
  \bibinfo{publisher}{European Mathematical Society Publishing House},
  \bibinfo{address}{Berlin}, pp. \bibinfo{pages}{303--333},
  \doi{10.4171/automata-1/9}.

\bibitemdeclare{inproceedings}{DBLP:conf/dlt/CrespiReghizziRP26}
\bibitem{DBLP:conf/dlt/CrespiReghizziRP26}
\bibinfo{author}{Stefano \surnamestart Crespi{-}Reghizzi\surnameend},
  \bibinfo{author}{Antonio \surnamestart Restivo\surnameend} \&
  \bibinfo{author}{Pierluigi \surnamestart {San Pietro}\surnameend}
  (\bibinfo{year}{2026}): \emph{\bibinfo{title}{Closure Operations on Picture
  Languages and Their Relation to Floor Plans}}.
\newblock In: {\slshape \bibinfo{booktitle}{{DLT}}}, {\slshape
  \bibinfo{series}{Lecture Notes in Computer Science}} \bibinfo{volume}{16578},
  \bibinfo{publisher}{Springer}, pp. \bibinfo{pages}{168--180},
  \doi{10.1007/978-3-032-28404-4_13}.

\bibitemdeclare{article}{DBLP:journals/siamcomp/EppsteinMSV12}
\bibitem{DBLP:journals/siamcomp/EppsteinMSV12}
\bibinfo{author}{David \surnamestart Eppstein\surnameend},
  \bibinfo{author}{Elena \surnamestart Mumford\surnameend},
  \bibinfo{author}{Bettina \surnamestart Speckmann\surnameend} \&
  \bibinfo{author}{Kevin \surnamestart Verbeek\surnameend}
  (\bibinfo{year}{2012}): \emph{\bibinfo{title}{Area-Universal and Constrained
  Rectangular Layouts}}.
\newblock {\slshape \bibinfo{journal}{{SIAM} J. Comput.}}
  \bibinfo{volume}{41}(\bibinfo{number}{3}), pp. \bibinfo{pages}{537--564},
  \doi{10.1137/110834032}.

\bibitemdeclare{article}{GiammarresiR92}
\bibitem{GiammarresiR92}
\bibinfo{author}{Dora \surnamestart Giammarresi\surnameend} \&
  \bibinfo{author}{Antonio \surnamestart Restivo\surnameend}
  (\bibinfo{year}{1992}): \emph{\bibinfo{title}{Recognizable Picture
  Languages}}.
\newblock {\slshape \bibinfo{journal}{Int. J. Pattern Recognit. Artif.
  Intell.}} \bibinfo{volume}{6}(\bibinfo{number}{2{\&}3}), pp.
  \bibinfo{pages}{241--256}, \doi{10.1142/S021800149200014X}.

\bibitemdeclare{incollection}{GiammRestivo1997}
\bibitem{GiammRestivo1997}
\bibinfo{author}{Dora \surnamestart Giammarresi\surnameend} \&
  \bibinfo{author}{Antonio \surnamestart Restivo\surnameend}
  (\bibinfo{year}{1997}): \emph{\bibinfo{title}{Two-dimensional languages}}.
\newblock In \bibinfo{editor}{G.~\surnamestart Rozenberg\surnameend} \&
  \bibinfo{editor}{A.~\surnamestart Salomaa\surnameend}, editors: {\slshape
  \bibinfo{booktitle}{Handbook of formal languages, vol. 3}},
  \bibinfo{publisher}{Springer}, \bibinfo{address}{Berlin}, pp.
  \bibinfo{pages}{215--267}, \doi{10.1007/978-3-642-59126-6_4}.

\bibitemdeclare{article}{iori2021exact}
\bibitem{iori2021exact}
\bibinfo{author}{Manuel \surnamestart Iori\surnameend},
  \bibinfo{author}{Vin{\'\i}cius~L. \surnamestart de~Lima\surnameend},
  \bibinfo{author}{Silvano \surnamestart Martello\surnameend},
  \bibinfo{author}{Fl{\'a}vio~K. \surnamestart Miyazawa\surnameend} \&
  \bibinfo{author}{Michele \surnamestart Monaci\surnameend}
  (\bibinfo{year}{2021}): \emph{\bibinfo{title}{Exact solution techniques for
  two-dimensional cutting and packing}}.
\newblock {\slshape \bibinfo{journal}{European J. Oper. Res.}}
  \bibinfo{volume}{289}(\bibinfo{number}{2}), pp. \bibinfo{pages}{399--415},
  \doi{10.1016/j.ejor.2020.06.050}.

\bibitemdeclare{article}{DBLP:journals/tcs/KantH97}
\bibitem{DBLP:journals/tcs/KantH97}
\bibinfo{author}{Goos \surnamestart Kant\surnameend} \& \bibinfo{author}{Xin
  \surnamestart He\surnameend} (\bibinfo{year}{1997}):
  \emph{\bibinfo{title}{Regular Edge Labeling of 4-Connected Plane Graphs and
  Its Applications in Graph Drawing Problems}}.
\newblock {\slshape \bibinfo{journal}{Theoret. Comput. Sci.}}
  \bibinfo{volume}{172}(\bibinfo{number}{1-2}), pp. \bibinfo{pages}{175--193},
  \doi{10.1016/S0304-3975(95)00257-X}.

\bibitemdeclare{inproceedings}{DBLP:conf/stacs/KaoSX08}
\bibitem{DBLP:conf/stacs/KaoSX08}
\bibinfo{author}{Jui{-}Yi \surnamestart Kao\surnameend},
  \bibinfo{author}{Jeffrey~O. \surnamestart Shallit\surnameend} \&
  \bibinfo{author}{Zhi \surnamestart Xu\surnameend} (\bibinfo{year}{2008}):
  \emph{\bibinfo{title}{The Frobenius Problem in a Free Monoid}}.
\newblock In \bibinfo{editor}{Susanne \surnamestart Albers\surnameend} \&
  \bibinfo{editor}{Pascal \surnamestart Weil\surnameend}, editors: {\slshape
  \bibinfo{booktitle}{Proceedings of the 25th Annual Symposium on Theoretical
  Aspects of Computer Science, {STACS} 2008, Bordeaux, France, February 21-23,
  2008}}, \bibinfo{series}{LIPIcs}, pp. \bibinfo{pages}{421--432},
  \doi{10.4230/LIPICS.STACS.2008.1362}.

\bibitemdeclare{article}{DBLP:journals/tcs/KumarS21}
\bibitem{DBLP:journals/tcs/KumarS21}
\bibinfo{author}{Vinod \surnamestart Kumar\surnameend} \&
  \bibinfo{author}{Krishnendra \surnamestart Shekhawat\surnameend}
  (\bibinfo{year}{2021}): \emph{\bibinfo{title}{A transformation algorithm to
  construct a rectangular floorplan}}.
\newblock {\slshape \bibinfo{journal}{Theoret. Comput. Sci.}}
  \bibinfo{volume}{871}, pp. \bibinfo{pages}{94--106},
  \doi{10.1016/J.TCS.2021.04.014}.

\bibitemdeclare{article}{DBLP:journals/tcs/LatteuxS97}
\bibitem{DBLP:journals/tcs/LatteuxS97}
\bibinfo{author}{Michel \surnamestart Latteux\surnameend} \&
  \bibinfo{author}{David \surnamestart Simplot\surnameend}
  (\bibinfo{year}{1997}): \emph{\bibinfo{title}{Recognizable Picture Languages
  and Domino Tiling}}.
\newblock {\slshape \bibinfo{journal}{Theoret. Comput. Sci.}}
  \bibinfo{volume}{178}(\bibinfo{number}{1-2}), pp. \bibinfo{pages}{275--283},
  \doi{10.1016/S0304-3975(96)00283-6}.

\bibitemdeclare{article}{DBLP:journals/dam/LodiMP17}
\bibitem{DBLP:journals/dam/LodiMP17}
\bibinfo{author}{Andrea \surnamestart Lodi\surnameend},
  \bibinfo{author}{Michele \surnamestart Monaci\surnameend} \&
  \bibinfo{author}{Enrico \surnamestart Pietrobuoni\surnameend}
  (\bibinfo{year}{2017}): \emph{\bibinfo{title}{Partial enumeration algorithms
  for Two-Dimensional Bin Packing Problem with guillotine constraints}}.
\newblock {\slshape \bibinfo{journal}{Discrete Appl. Math.}}
  \bibinfo{volume}{217}, pp. \bibinfo{pages}{40--47},
  \doi{10.1016/J.DAM.2015.09.012}.

\bibitemdeclare{inproceedings}{DBLP:conf/stacs/Matz97}
\bibitem{DBLP:conf/stacs/Matz97}
\bibinfo{author}{Oliver \surnamestart Matz\surnameend} (\bibinfo{year}{1997}):
  \emph{\bibinfo{title}{Regular Expressions and Context-Free Grammars for
  Picture Languages}}.
\newblock In \bibinfo{editor}{R{\"{u}}diger \surnamestart Reischuk\surnameend}
  \& \bibinfo{editor}{Michel \surnamestart Morvan\surnameend}, editors:
  {\slshape \bibinfo{booktitle}{STACS 97, 14th Annual Symposium on Theoretical
  Aspects of Computer Science, L{\"u}beck, Germany, February 27--March 1, 1997,
  Proceedings}}, {\slshape \bibinfo{series}{Lecture Notes in Computer Science}}
  \bibinfo{volume}{1200}, \bibinfo{publisher}{Springer}, pp.
  \bibinfo{pages}{283--294}, \doi{10.1007/BFB0023466}.

\bibitemdeclare{article}{DBLP:journals/iandc/PradellaCC11}
\bibitem{DBLP:journals/iandc/PradellaCC11}
\bibinfo{author}{Matteo \surnamestart Pradella\surnameend},
  \bibinfo{author}{Alessandra \surnamestart Cherubini\surnameend} \&
  \bibinfo{author}{Stefano \surnamestart Crespi{-}Reghizzi\surnameend}
  (\bibinfo{year}{2011}): \emph{\bibinfo{title}{A unifying approach to picture
  grammars}}.
\newblock {\slshape \bibinfo{journal}{Inform. and Comput.}}
  \bibinfo{volume}{209}(\bibinfo{number}{9}), pp. \bibinfo{pages}{1246--1267},
  \doi{10.1016/J.IC.2011.07.001}.

\bibitemdeclare{article}{Reid2005KlarnerSystems}
\bibitem{Reid2005KlarnerSystems}
\bibinfo{author}{Michael \surnamestart Reid\surnameend} (\bibinfo{year}{2005}):
  \emph{\bibinfo{title}{Klarner Systems and Tiling Boxes with Polyominoes}}.
\newblock {\slshape \bibinfo{journal}{J. Combin. Theory Ser. A}}
  \bibinfo{volume}{111}(\bibinfo{number}{1}), pp. \bibinfo{pages}{89--105},
  \doi{10.1016/j.jcta.2004.10.010}.

\bibitemdeclare{article}{Reid2008Barnes}
\bibitem{Reid2008Barnes}
\bibinfo{author}{Michael \surnamestart Reid\surnameend} (\bibinfo{year}{2008}):
  \emph{\bibinfo{title}{Asymptotically Optimal Box Packing Theorems}}.
\newblock {\slshape \bibinfo{journal}{Electron. J. Combin.}}
  \bibinfo{volume}{15}(\bibinfo{number}{1}), p. \bibinfo{pages}{R78},
  \doi{10.37236/802}.

\bibitemdeclare{article}{DBLP:journals/tcs/Simplot99}
\bibitem{DBLP:journals/tcs/Simplot99}
\bibinfo{author}{David \surnamestart Simplot\surnameend}
  (\bibinfo{year}{1999}): \emph{\bibinfo{title}{A Characterization of
  Recognizable Picture Languages by Tilings by Finite Sets}}.
\newblock {\slshape \bibinfo{journal}{Theoret. Comput. Sci.}}
  \bibinfo{volume}{218}(\bibinfo{number}{2}), pp. \bibinfo{pages}{297--323},
  \doi{10.1016/S0304-3975(98)00328-4}.

\bibitemdeclare{article}{Wagon1987}
\bibitem{Wagon1987}
\bibinfo{author}{Stan \surnamestart Wagon\surnameend} (\bibinfo{year}{1987}):
  \emph{\bibinfo{title}{Fourteen proofs of a result about tiling a rectangle}}.
\newblock {\slshape \bibinfo{journal}{Amer. Math. Monthly}}
  \bibinfo{volume}{94}(\bibinfo{number}{7}), pp. \bibinfo{pages}{601--617},
  \doi{10.2307/2322213}.

\end{thebibliography}

\end{document}